\title{The Parameterized Complexity of Computing the Linear Vertex
  Arboricity\thanks{This is a shortened version of the master's thesis
    of the first author~\cite{erhardt-MTh21}.}}
\titlerunning{The Parameterized Complexity of Computing the Linear
  Vertex Arboricity}
\author{Alexander Erhardt\inst{1} \and Alexander~Wolff\inst{1}\orcidID{0000-0001-5872-718X}}
\authorrunning{A.~Erhardt and A.~Wolff}
\institute{Universit\"at W\"urzburg, W\"urzburg, Germany}
\definecolor{dark blue}{rgb}{0.121,0.47,0.705}
\let\emph\relax\DeclareTextFontCommand{\emph}{\color{dark blue}\em}
\renewcommand{\orcidID}[1]{\href{https://orcid.org/#1}{\includegraphics[scale=.03]{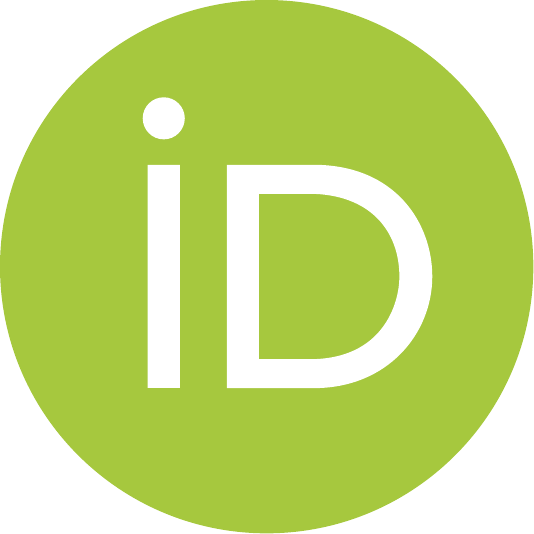}}}
\let\doendproof\endproof
\renewcommand\endproof{~\hfill$\qed$\doendproof}
\DeclareMathOperator{\seg}{seg}
\DeclareMathOperator{\lin}{line}
\DeclareMathOperator{\lva}{lva}
\begin{document}
	
\maketitle
	
\begin{abstract}
  The \emph{linear vertex arboricity} of a graph is the smallest
  number of sets into which the vertices of a graph can be partitioned
  so that each of these sets induces a linear forest.  Chaplick et
  al.\ [JoCG 2020] showed that, somewhat surprisingly, the linear
  vertex arboricity of a graph is the same as the \emph{3D weak line
    cover number} of the graph, that is, the minimum number of
  straight lines necessary to cover the vertices of a crossing-free
  straight-line drawing of the graph in $\mathbb{R}^3$.  Chaplick et
  al.\ [JGAA 2023] showed that deciding whether a given graph has
  linear vertex arboricity~2 is NP-hard.

  In this paper, we investigate the parameterized complexity of
  computing the linear vertex arboricity.  We show that the problem is
  para-NP-hard with respect to the parameter maximum degree.  Our
  result is tight in the following sense.  All graphs of maximum
  degree~4 (except for~$K_4$) have linear vertex arboricity at most~2,
  whereas we show that it is NP-hard to decide, given a graph of
  maximum degree 5, whether its linear vertex arboricity is~2.
  Moreover, we show that, for {\em planar} graphs, the same question
  is NP-hard for graphs of maximum degree~6, leaving open the
  maximum-degree-5 case.  Finally, we prove that, for any $k \ge 1$,
  deciding whether the linear vertex arboricity of a graph is~at
  most~$k$ is fixed-parameter tractable with respect to the treewidth
  of the given graph.
  
  \keywords{Visual complexity \and weak line cover number \and linear
    vertex arboricity \and parameterized complexity}
\end{abstract}

\section{Introduction}

Various measures for the \emph{visual complexity} of the drawing of a
graph have been suggested.  For example, the \emph{strong line cover
  number} is defined to be, for a given crossing-free straight-line
drawing of a graph, the cardinality of the smallest set of straight
lines whose union contains (all vertices and edges of) the drawing.
Accordingly, the \emph{weak line cover number} is the cardinality of
the smallest set of straight lines whose union contains all vertices
of the drawing.  These and other measures for the visual complexity of
a drawing of a graph lead to measures for the visual complexity of a
{\em graph} by taking the minimum over all valid drawings.

In~$\mathbb{R}^2$, both versions of the line cover number are
restricted to planar graphs.  Chaplick et
al.~\cite{chaplick_etal:CompGeo20} showed that, in any space
$\mathbb{R}^d$ with $d>3$, the numbers are the same as the
corresponding numbers in~$\mathbb{R}^3$.  They also showed that the 3D
weak line cover number of a graph~$G$ is the same as the \emph{linear
  vertex arboricity} of~$G$, which is defined purely combinatorially,
namely as the smallest size of a partition of the vertex set of~$G$
such that each set induces a linear forest, that is, a collection of
paths.  Let \textsc{$k$-Lva} denote the decision problem of deciding
whether the linear vertex arboricity of a given graph is at most~$k$.
Matsumoto~\cite{Matsumoto90} provided an upper bound for the linear
vertex arboricity $\lva(G)$ of a connected nonempty graph~$G$, namely
$\lva(G) \le 1 + \lfloor \Delta(G)/2 \rfloor$, where $\Delta(G)$
denotes the maximum degree of $G$.  Moreover, if $\Delta(G)$ is even,
then $\lva(G) = 1 + \lfloor \Delta(G)/2 \rfloor$ if and only if $G$ is
either a cycle or a complete graph.
By applying a meta-theorem of Farrugia~\cite{f-vpfai-EJC04}, Chaplick
et al.~\cite{cflrvw-cdgfl-JGAA23} showed that \textsc{2-Lva} is
NP-hard.

\paragraph{Our Contribution.}

In this paper, we investigate the parameterized complexity of
\textsc{$k$-Lva}.  We first show that the problem is para-NP-hard with
respect to the maximum degree.  More
specifically, we show that \textsc{2-Lva} is NP-hard even if the
graph has maximum vertex degree~5; see \cref{sec:hardness-deg5}.  On
the other hand, due to Matsumoto's result~\cite{Matsumoto90} mentioned
above, \textsc{2-Lva} has a simple solution for graphs of degree at
most~4: among those, $K_5$ is the only no-instance.  In other words,
our hardness result is tight with respect to maximum degree.

We further prove that, restricted to planar graphs, \textsc{2-Lva} is
NP-hard for graphs of maximum degree~6; see
\cref{sec:hardness-planar}.  Annoyingly, this leaves the complexity of
\textsc{2-Lva} open for the class of planar graphs of maximum
degree~5.  Finally, we turn to \textsc{$k$-Lva}, that is, to the
problem of deciding whether the linear vertex arboricity of a given
graph is at most~$k$.  We show that \textsc{$k$-Lva} is
fixed-parameter tractable (FPT) with respect to the treewidth of the
given graph; see \cref{sec:fpt}.  Finally, we give compact integer
linear programming (ILP) and satisfiability (SAT) formulations for
\textsc{$k$-Lva} and close with some open problems; see
\cref{sec:ilp,sec:open}.

\paragraph{Related Work.}

Chaplick et al.~\cite{cflrvw-cdgfl-JGAA23} showed that computing the
2D strong line cover number $\lin(G)$ of a planar graph~$G$ is FPT
with respect to the natural parameter.  They observed that, for a
given graph~$G$ and an integer~$k$, the statement $\lin(G)\le k$ can
be expressed by a first-order formula about the reals.  This
observation shows that the problem of deciding whether or not
$\lin(G) \le k$ lies in $\exists\mathbb{R}$: it reduces in polynomial
time to the decision problem for the existential theory of the reals.
The algorithm of Chaplick et al.\ crucially uses the exponential-time
decision procedure for the existential theory of the reals by
Renegar~\cite{Renegar92a,Renegar92b,Renegar92c}.  Unfortunately, this
procedure does {\em not} yield a geometric realization.

Firman et al.~\cite{frw-wlcn-EuroCG18} showed that the Platonic graphs
all have 3D~weak line cover number~2 and that there is an infinite
family of polyhedral graphs with maximum degree~6, treewidth~3, and
unbounded 2D weak line cover number.  Biedl et
al.~\cite{bfmw-lpcnr-GD19} proved that the 2D~weak line cover number
of the universal stacked triangulation of depth~$d$ is $d+1$, that
this number is NP-hard to compute for general planar graphs, and that
every graph with $n$ vertices and 3D~weak line cover number~2 has at
most $5n-19$ edges.  Felsner~\cite{felsner-JoCG20} showed that every
plane graph with $n$ vertices and without separating triangle has
2D~weak line cover number at most $\sqrt{2n}$, whereas
Eppstein~\cite{eppstein-JoCG21} constructed, for every positive
integer~$\ell$, a cubic 3-connected planar bipartite graph with
$O(\ell^3)$ vertices and 2D~weak line cover number greater
than~$\ell$.

A \emph{segment} in a straight-line drawing of a graph is a maximal
set of edges that together form a line segment.  The \emph{segment
  number} $\seg(G)$ of a planar graph~$G$ is the minimum number of
segments in any planar straight-line drawing
of~$G$~\cite{dujmovic_etal:compGeo07}.  It is
$\exists \mathbb{R}$-complete \cite{orw-ysng-GD19}
(and hence NP-hard) to compute the segment number of a planar graph.
Clearly, $\lin(G) \le \seg(G)$
for any graph~$G$~\cite{chaplick_etal:CompGeo20}.  Moreover, Cornelsen
et al.~\cite{cdggkw-pcsn-GD23} proved that, for any connected
graph~$G$, it holds that $\seg(G) \le \lin^2(G)$.  They showed also
that computing the segment number of a graph is FPT with respect to
each of the following parameters: the natural parameter, the 2D strong
line cover number, and the \emph{vertex cover number}.  Recall that
the vertex cover number is the minimum number of vertices that have to
be removed such that the remaining graph is an independent set.

For circular-arc drawings of planar graphs, the \emph{arc
  number}~\cite{s-dgfa-JGAA15} and \emph{circle cover
  number}~\cite{krw-dgfcf-JGAA19} are defined analogously as the
segment number and the 2D strong line cover number, respectively, for
straight-line drawings.  For an example, see \cref{fig:numbers}.  All
these numbers have been considered as meaningful measures of the
visual complexity of a drawing of a graph; in particular, for the
segment number, Kindermann et al.~\cite{kindermannMS:JGAA2012} have
conducted a user study to understand whether drawings with low segment
number are actually better from a cognitive point of view.

\begin{figure}[tb]
  \centering
  \includegraphics{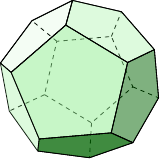} \hfil
  \includegraphics{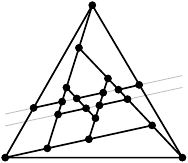} \hfil
  \includegraphics{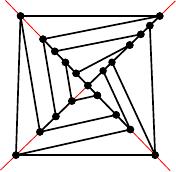} \hfil
  \includegraphics{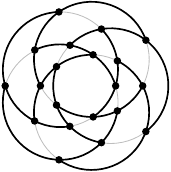}
  \caption{The dodecahedron graph has 20 vertices and 30 edges~(a),
    segment number~13, 2D strong line cover number~10~(b), 2D and 3D
    weak line cover number~2~\cite{frw-wlcn-EuroCG18}~(c), arc
    number~10, and circle cover number~5~\cite{krw-dgfcf-JGAA19}~(d).}
  \label{fig:numbers}
\end{figure}

\section{NP-Hardness of 2-LVA for Planar Graphs}
\label{sec:hardness-planar}

We reduce from \textsc{Clause-Linked-Planar-Exactly-3-Bounded-3-SAT},
a satisfiability problem with several additional restrictions
concerning the given 3-SAT formula~$\varphi$:
\begin{enumerate}[label=(F\arabic*),leftmargin=*]
\item The variable--clause incidence graph~$H_\varphi$ of~$\varphi$ is
  planar and admits a planar embedding with a linear arrangement of
  the clause vertices such that every two consecutive clause vertices
  can be connected by an extra edge without introducing any
  crossings.\label{enum:graph}
\item Every variable in~$\varphi$ occurs in exactly three clauses,
  once as a negated and twice as a positive
  literal.\label{enum:variable}
\item Every clause of~$\varphi$ consists of two or three variables.
  If a clause consists of three variables, none of them is negated in
  that clause.\label{enum:clause}
\end{enumerate}
The problem \textsc{Clause-Linked-Planar-Exactly-3-Bounded-3-SAT}~is
NP-hard~\cite{Fellows95}.  We \mbox{transform}~$(\varphi,H_\varphi)$
into a graph~$G$ of maximum degree~6 such that $\varphi$ has a
satisfiable truth assignment if and only if $\lva(G)=2$.

We first construct a basic building block $B$ that we use in the
variable gadget, in the clause gadget, and for the links between
consecutive clause gadgets.  The block $B$ is a triangulated planar
graph with seven vertices; see \cref{fig:BasicMaxDeg6}.  In the
following figures, we indicate the partition of the vertex sets of our
gadgets by using the colors white and gray.  A \emph{legal coloring}
of the vertices then corresponds to a partition into two sets each of
which induces a linear forest.  In such a legal coloring two simple
rules apply:
\begin{enumerate}
\item every cycle (in particular, every triangle) must be bichromatic
  (to avoid that one of the color classes induces a cycle), and
\item the neighborhood of every vertex of degree at least~3 must be
  bichromatic (to avoid that one of the color classes induces a star).
\end{enumerate}

\begin{figure}[tbh]
  \centering
  \includegraphics{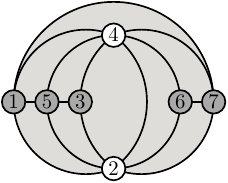}
  \caption{Basic building block $B$} 
  \label{fig:BasicMaxDeg6}
\end{figure}

\begin{lemma}
  \label{lem:basic-block}
  Assuming that vertex~1 of $B$ is gray, the unique legal coloring
  of~$B$ is as in \cref{fig:BasicMaxDeg6}. 
\end{lemma}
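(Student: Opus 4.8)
The plan is to argue by exhaustive constraint propagation, starting from the single hypothesis that vertex~1 is gray and repeatedly invoking the two necessary conditions for a legal coloring stated above: every triangle of $B$ must receive both colors, and the neighborhood of every vertex of $B$ of degree at least~3 must receive both colors. Since $B$ is triangulated (hence $3$-connected, with the link of every vertex a cycle), I would process the vertices in an order $1 = u_1, u_2, \dots, u_7$ in which, at each step, $u_i$ together with two already-colored vertices forms a triangle, or $u_i$ lies on the link of an already-colored high-degree vertex. This yields a chain of forced deductions whose endpoint is exactly the coloring of \cref{fig:BasicMaxDeg6}; since every deduction is forced, that coloring is the only legal one with vertex~1 gray.

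To get the propagation going I would exploit the link of vertex~1: it is a cycle, so it must already carry both colors, and, since vertex~1 is gray, it can contain at most two gray vertices (three same-colored neighbors of a gray vertex induce a monochromatic star) and no two of those may be consecutive on the link (two adjacent gray neighbors of gray vertex~1 would close a monochromatic triangle). For the small graph $B$ these observations, combined with the triangle rule applied to the chords of the link and to the triangles on the far side of vertex~1's link, pin down the colors of all neighbors of vertex~1; the colors of the remaining one or two vertices then follow from one more application of the triangle rule.

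The actual work is establishing uniqueness, not just existence: at the few steps where the rules applied to the already-colored vertices leave both colors locally feasible for $u_i$, one must rule one of them out. For each such branch point I would do a short look-ahead, propagating each candidate color one or two further steps and showing that exactly one candidate is forced into a monochromatic triangle or a monochromatic high-degree neighborhood. Ordering the vertices so that a maximum-degree vertex --- which exists with degree at least~$5$, since $B$ has $3\cdot 7 - 6 = 15$ edges --- and its whole link are colored as early as possible keeps the number of branch points tiny, as a monochromatic link is immediately a monochromatic cycle. This handling of the branch points is the only delicate part; everything else is mechanical over the fixed seven-vertex graph.

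Finally I would confirm that the coloring reached is genuinely legal, i.e., that each color class induces a graph of maximum degree at most~$2$ without a cycle --- a finite check over the fifteen edges of $B$ --- so that \cref{fig:BasicMaxDeg6} indeed records the (therefore unique) legal coloring of $B$ with vertex~1 gray.
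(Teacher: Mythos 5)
Your high-level strategy is the same one the paper uses: starting from the hypothesis that vertex~1 is gray, propagate the two necessary conditions (every triangle is bichromatic; no vertex has three neighbors of its own color) and resolve the remaining ambiguities by case analysis. Your supporting observations are also all correct: in a seven-vertex triangulation the link of every vertex is a cycle, a gray vertex~1 can have at most two gray neighbors and no two adjacent ones, and $B$ must contain a vertex of degree at least~$5$.

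The genuine gap is that you never execute the argument. For a statement about one fixed seven-vertex graph, the finite case analysis \emph{is} the proof, and your proposal defers it wholesale: ``at the few steps where\dots{} both colors [are] locally feasible\dots{} I would do a short look-ahead.'' Which steps those are, how many cases arise, and why each dies is exactly what must be supplied, and it is not. In particular, your claim that the link-of-vertex-1 observations, plus the triangle rule on its chords, already ``pin down the colors of all neighbors of vertex~1'' is unjustified and, for the actual graph $B$, false: those observations still leave several candidate sets of gray neighbors of vertex~1, all but one of which must be killed by propagating further into the rest of $B$. Since every one of your deductions is a valid necessary condition, nothing you write would \emph{fail}, but nothing you write establishes uniqueness either. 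The paper closes exactly this gap with a concrete two-step argument: it first proves that the triangle $234$ contains exactly one gray vertex, then that this vertex must be vertex~3, after which the colors of all remaining vertices are forced by single triangles. (You could not have produced those specifics without seeing \cref{fig:BasicMaxDeg6}, but that does not change the assessment: as written, your text describes a decision procedure that would settle the lemma, rather than a proof that it holds.)
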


\begin{proof}
  Clearly the coloring in \cref{fig:BasicMaxDeg6} is legal.
  
  Our first claim is that the triangle 234 contains exactly one gray
  vertex.

  To this end, note that vertices~3 and~4 cannot both be gray.
  Indeed, this would force vertices~2 (due to triangle 234) as well as
  vertices~6 and~7 (due to the two gray neighbors~1 and~3 of~4) to be
  white, a contradiction since 267 is a triangle.

  For symmetry, vertices~2 and~3 cannot both be gray.
  Due to triangle 124, vertices~2 and~4 cannot both be gray.
  This settles our first claim.

  Our second claim is that the only gray vertex in triangle 234 is
  vertex~3.  Due to symmetry, it suffices to show that vertex~2 cannot
  be the gray vertex.  Indeed, this would force vertex~5 to be white
  (to avoid a gray triangle 125), but then triangle 345 would be
  white.

  This settles the second claim.  In other words, vertex~3 must be
  gray and vertices~2 and~4 must be white.  This implies that
  vertices~5, 6, and~7 must be gray (due to triangles 245, 246, and
  247, respectively).  This completes our proof.
\end{proof}

Now it is easy to extend~$B$ to a variable gadget~$V_a$ for the
variable~$a$; see \cref{fig:VariableMaxDeg6}.  We simply attach two
white vertices (labeled $a$) to vertex~1 and two white vertices
labeled $a'$ and $a''$ to vertex~7.  The latter two vertices are
adjacent to each other and a gray vertex~$\bar{a}$.

\begin{figure}[tbh]
  \centering
  \includegraphics{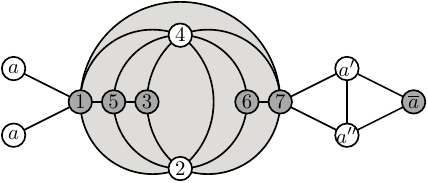}
  \caption{Variable gadget $V_{a}$ for variable $a$ in $\varphi$} 
  \label{fig:VariableMaxDeg6}
\end{figure}

\begin{lemma}
  \label{lem:variable-gadget}
  Let~$a$ be a variable in~$\varphi$.  Assuming that vertex~1 is gray,
  the unique legal coloring of~$V_a$ is as in \cref{fig:BasicMaxDeg6}.
  In particular, the two vertices labeled~$a$ have the same color,
  whereas the unique vertex labeled~$\bar{a}$ has a different color.
  Moreover, each of the three has a color different from its neighbors
  in~$V_a$.
\end{lemma}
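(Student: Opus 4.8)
The plan is to bootstrap from \cref{lem:basic-block}. Since an induced subgraph of a linear forest is again a linear forest, every legal coloring of $V_a$ restricts to a legal coloring of the (induced) subgraph $B$ in which vertex~1 is gray; by \cref{lem:basic-block} this restriction is exactly the coloring of \cref{fig:BasicMaxDeg6}, so in particular vertices~1 and~7 are gray and vertices~2 and~4 are white. It then remains to pin down the colors of the five attached vertices --- the two pendant vertices labeled~$a$ at vertex~1, the two vertices $a'$ and~$a''$ at vertex~7, and the vertex~$\bar a$. Throughout I would use only the two defining constraints of a legal coloring: no cycle is monochromatic, and no vertex has three neighbors of its own color (otherwise the corresponding color class would contain a cycle or a $K_{1,3}$ and fail to be a linear forest).

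First I would show that both vertices labeled~$a$ are white. Reading the edges of $B$ off \cref{fig:BasicMaxDeg6}, vertex~1 is gray and already has two gray neighbors inside~$B$; hence if either pendant vertex labeled~$a$ were gray as well, vertex~1 would have three gray neighbors, so the gray class would contain a $K_{1,3}$ centered at vertex~1. So both vertices labeled~$a$ are white. The same argument applies at vertex~7, which is gray and also has two gray neighbors in~$B$: making $a'$ or~$a''$ gray would give vertex~7 three gray neighbors, so $a'$ and~$a''$ are both white. Since $a'$, $a''$, and~$\bar a$ form a triangle and $a'$, $a''$ are white, $\bar a$ must be gray, as otherwise this triangle would be monochromatic. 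Finally I would check that the coloring just obtained --- the one in \cref{fig:VariableMaxDeg6} --- is indeed legal: the only monochromatic edge introduced by the new vertices is the white edge $a'a''$, whose white component is a single path, so neither color class acquires a cycle or a vertex with three same-colored neighbors.

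This establishes uniqueness, and the three assertions of the lemma follow: the two vertices labeled~$a$ are both white and hence share a color, $\bar a$ is gray and hence has the other color, and each of the three differs in color from all of its neighbors in $V_a$ --- the vertices labeled~$a$ from their sole neighbor, the gray vertex~1, and $\bar a$ from its white neighbors $a'$ and~$a''$. The step carrying the real weight is the structural fact that, in $B$'s forced coloring, each of vertices~1 and~7 is \emph{already} adjacent to two gray vertices, so that a single further gray neighbor over-saturates it; this is precisely what the construction of~$B$ (down to the choice of the extra edge that makes it a triangulation) is engineered to guarantee, and I would read it off \cref{fig:BasicMaxDeg6}. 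Everything else is a one-line application of the two constraints above.
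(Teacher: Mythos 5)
Your proposal is correct and follows essentially the same route as the paper's proof: restrict to the copy of $B$, invoke \cref{lem:basic-block}, observe that vertices~1 and~7 already have two gray neighbors in~$B$ so their new neighbors must be white, and use the triangle $a'a''\bar a$ to force $\bar a$ gray. Your version merely spells out the $K_{1,3}$ justification and the final legality check in more detail.
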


\begin{proof}
  Due to \cref{lem:basic-block}, the copy of~$B$ has a unique legal
  coloring assuming that vertex~1 is gray.
  Observe that vertices~1 and~7 already have two gray neighbors
  within~$B$.  Hence their new neighbors (labeled $a$, $a'$, and
  $a''$) must all be white.  Since $a'$, $a''$, and $\bar{a}$ form a
  triangle, $\bar{a}$ must be gray.
\end{proof}

We now turn to the clause gadget, which consists of a 4- or 5-cycle
that goes around a copy of the basic building~$B$; see
\cref{fig:ClauseMaxDeg6}.  Two vertices of this outer cycle are
adjacent to vertices~1 and~7 of~$B$.  These are labeled~0; the other
two or three vertices of the outer cycle are labeled with the names of
the two or three literals that occur in the clause.  Later we will
identify these vertices with the vertices of the variable gadgets
labeled with the same names (see \cref{fig:VariableMaxDeg6}) and we
will synchronize the colors of the 0-labeled vertices of {\em all}
clause gadgets.

\begin{figure}[tbh] 
  \centering
  \includegraphics[page=1]{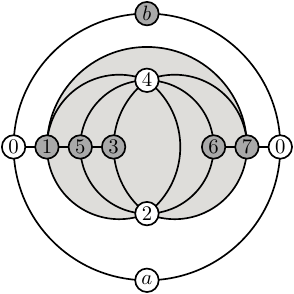} \hfil
  \includegraphics[page=2]{figures/Klausel-Gadget-MD6}  
  \caption{The gadgets for a two-variable clause $a \lor b$ (left) and
    for a three-variable clause $a \lor b \lor c$ (right).  Both use
    copies of the basic building block~$B$.}
  \label{fig:ClauseMaxDeg6}
\end{figure}

\begin{lemma}
  \label{lem:clause-gadget}
  Assuming that the left 0-labeled vertex of the clause gadget is
  colored white, in any unique legal coloring the right 0-labeled
  vertex is also white.  The coloring of the vertices labeled with the
  names of literals is arbitrary except they cannot all be white; see
  \cref{fig:ClauseMaxDeg6}.
\end{lemma}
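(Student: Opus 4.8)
The plan is to first determine the coloring inside the embedded copy of the block~$B$, then use the local coloring constraints at vertices~$1$ and~$7$ of~$B$ to carry the color white from the left $0$-vertex across~$B$ to the right one, and finally analyze the outer cycle. Throughout I use that a $2$-coloring is legal if and only if (i)~no cycle is monochromatic and (ii)~no vertex has three neighbors of its own color --- otherwise the corresponding color class would contain a cycle, respectively a claw, and fail to be a linear forest. Write $0_L$ and $0_R$ for the left and the right $0$-labeled vertex of the clause gadget (see~\cref{fig:ClauseMaxDeg6}).

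Since the only edges of the clause gadget leaving $V(B)$ are the two edges $\{0_L,1\}$ and $\{0_R,7\}$, the subgraph induced by $V(B)$ equals~$B$, and restricting any legal coloring of the gadget to $V(B)$ gives a legal coloring of~$B$. By \cref{lem:basic-block} and the symmetry under swapping the two colors, this restricted coloring is, up to such a swap, exactly the one of \cref{fig:BasicMaxDeg6}. In both of these two colorings vertices~$1$ and~$7$ get the same color~$c$, and each of them has two neighbors of color~$c$ inside~$B$, as used in the proof of \cref{lem:variable-gadget}. Consequently, if the remaining neighbor~$0_L$ of vertex~$1$ were also colored~$c$, then vertex~$1$ would be the center of a monochromatic claw, contradicting~(ii); hence $0_L$, and by the same argument $0_R$, gets the color opposite to~$c$. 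Since we assume that $0_L$ is white, vertices~$1$ and~$7$ are gray, the embedded copy of~$B$ is colored exactly as in \cref{fig:BasicMaxDeg6}, and $0_R$ is white, as claimed; in particular the coloring is already forced on $V(B)\cup\{0_L,0_R\}$.

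It remains to deal with the literal vertices on the outer cycle, which has length~$4$ or~$5$ and contains the two white vertices $0_L$ and $0_R$. If all literal vertices were white, this cycle would be monochromatic, contradicting~(i); so the literal vertices are not all white. Conversely, any coloring of the literal vertices that is not all white extends the forced coloring of $V(B)\cup\{0_L,0_R\}$ to a legal coloring of the whole gadget: each literal vertex has degree~$2$ in the gadget; each $0$-vertex has three neighbors, one of which (vertex~$1$ or~$7$) is gray, hence at most two of its own (white) color; the vertices of~$B$ inherit property~(ii) from the legal coloring of~$B$; and every cycle of the gadget lies inside~$B$ (not monochromatic by \cref{fig:BasicMaxDeg6}), is the outer cycle (not monochromatic, as some literal vertex is gray while $0_L$ and $0_R$ are white), or uses one of the edges $\{0_L,1\}$ and $\{0_R,7\}$ and hence contains a white $0$-vertex together with the gray vertex~$1$ or~$7$. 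Thus~(i) and~(ii) both hold.

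The only point that needs care, I expect, is the reduction to \cref{lem:basic-block}: one must check that the embedded~$B$ is forced into one of exactly the two color-swapped colorings of that lemma, and that adding the single edge at vertex~$1$ (resp.~$7$) creates neither a third neighbor of that vertex's color nor a new monochromatic cycle through~$B$. The rest is a routine verification of rules~(i) and~(ii).
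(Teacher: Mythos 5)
Your proposal is correct and follows essentially the same route as the paper's proof: reduce to the two color-swapped colorings of $B$ from \cref{lem:basic-block}, use the fact that vertices~1 and~7 each already have two neighbors of their own color inside~$B$ to force the colors of the two 0-labeled vertices, and then rule out an all-white outer cycle. You additionally verify in detail the converse (that any not-all-white assignment to the literal vertices extends to a legal coloring), which the paper asserts more briefly.
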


\begin{proof}
  Assuming that the left 0-labeled vertex of the clause gadget is
  white, vertex~1 of the copy of~$B$ must be gray since it has two
  neighbors of the same color (vertices~5 and~7) due to
  \cref{lem:basic-block}.  This fixes the coloring of the copy of~$B$
  and forces the right 0-labeled vertex to be white.  The remaining
  two or three vertices labeled with the names of literals can be
  colored arbitrarily as long as they are not all colored white (the
  color of the two 0-labeled vertices) because then the set of white
  vertices would induce a cycle.
\end{proof}

Obviously, \textsc{2-Lva} is in NP since we can easily verify a given
solution in polynomial time.  Next we show the NP-hardness.

\begin{theorem}
  \label{thm:hardness-planar}
  \textsc{2-Lva} is NP-hard even for planar graphs of maximum
  degree~6.
\end{theorem}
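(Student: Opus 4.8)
The plan is to complete the polynomial-time reduction from \textsc{Clause-Linked-Planar-Exactly-3-Bounded-3-SAT} that has been set up via the three gadget lemmas, and then to verify correctness in both directions. First I would describe the assembly of the target graph~$G$: take one copy of the variable gadget~$V_a$ for each variable~$a$, one copy of the appropriate clause gadget for each clause, identify each literal-labeled vertex of a clause gadget with the correspondingly labeled vertex of the relevant variable gadget (using property~\ref{enum:variable}, each variable appears once negated and twice positive, matching the $\bar a$ and the two $a$-vertices available in~$V_a$; property~\ref{enum:clause} guarantees that three-variable clauses contain only positive literals, which is what the three-variable clause gadget expects), and finally use property~\ref{enum:graph} to lay out the clause gadgets in a line and to connect consecutive clause gadgets by a path of basic blocks~$B$ that synchronizes the colors of all the $0$-labeled vertices. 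I would argue that this construction is planar (the variable--clause incidence graph is planar with the required linear arrangement of clauses, and each gadget is a planar patch that can be glued in along its boundary) and that it has maximum degree~$6$: the only vertices that could accumulate degree are the $1$- and $7$-vertices of the blocks and the identified literal vertices, and one checks that each stays within degree~$6$ by counting the contributions from~$B$ plus the few attached edges.

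Next I would prove the forward direction: given a satisfying assignment for~$\varphi$, color all $0$-labeled vertices white; by \cref{lem:clause-gadget} this consistently forces the coloring of every block inside a clause gadget and every linking block, and by \cref{lem:variable-gadget} each variable gadget~$V_a$ is legally colored once we set vertex~$1$ of its block to gray, with the two $a$-vertices receiving one color and the $\bar a$-vertex the other. Matching ``$a$ true'' with ``the $a$-vertices are gray'' (equivalently ``$\bar a$ is white''), the satisfying assignment means every clause contains a true literal, so in each clause gadget not all literal-labeled vertices are white, which is exactly the condition \cref{lem:clause-gadget} allows; hence the coloring extends to a legal $2$-coloring of all of~$G$, so $\lva(G)\le 2$. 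Since $G$ contains triangles, $\lva(G)\ge 2$, giving $\lva(G)=2$.

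For the reverse direction I would take a legal $2$-coloring of~$G$, i.e.\ a partition of $V(G)$ into two induced linear forests. The linking blocks force all $0$-labeled vertices to share one color; by symmetry of colors I may assume it is white. Then \cref{lem:clause-gadget} tells us that, in every clause gadget, the literal-labeled vertices are not all white, and \cref{lem:variable-gadget} tells us that within each variable gadget the two $a$-vertices are monochromatic and the $\bar a$-vertex is the opposite color. Reading off ``$a$ is true iff its $a$-vertices are non-white'' gives a well-defined assignment in which each clause has a true literal, so $\varphi$ is satisfiable. Together with the forward direction this shows $\varphi$ is satisfiable iff $\lva(G)=2$, and since $G$ is constructed in polynomial time, planar, and of maximum degree~$6$, \textsc{2-Lva} is NP-hard on this graph class.

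The main obstacle I anticipate is not the logical equivalence, which follows cleanly from the three lemmas, but the \emph{geometric/combinatorial bookkeeping} of the gluing: one must check carefully that identifying literal vertices across gadgets and inserting the chains of linking blocks keeps the graph planar (this is where property~\ref{enum:graph}'s ``linear arrangement of clause vertices with extra edges addable without crossings'' is essential) and, simultaneously, keeps every vertex degree at most~$6$ — in particular the literal vertices, which receive edges from both a variable gadget and a clause gadget, and the block vertices~$1$ and~$7$ used in the links, are the tight cases and deserve an explicit count. A secondary point requiring care is ensuring the color-synchronization chain between consecutive clauses genuinely propagates the same color (so that ``all $0$-labeled vertices white'' is forced, not merely ``consistent within each clause''); this should follow from iterating \cref{lem:basic-block} along the chain, but it must be spelled out.
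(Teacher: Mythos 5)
Your proposal is correct and follows essentially the same route as the paper's proof: the same assembly of variable and clause gadgets glued via identified literal vertices, the same chain of linking copies of~$B$ forcing all $0$-labeled vertices to one color, and the same two-directional correctness argument keyed to \cref{lem:variable-gadget,lem:clause-gadget}. The obstacles you flag (planarity of the gluing, the degree count at literal and block vertices, and the propagation of the color along the chain) are exactly the points the paper also addresses, at a comparable level of detail.
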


\begin{proof}
  We now describe our complete reduction.  Given an
  instance~$(\varphi,H_\varphi)$ of
  \textsc{Clause-Linked-Planar-Exactly-3-Bounded-3-SAT}, we construct
  a planar graph~$G$ using the planar embedding of~$H_\varphi$ as a
  pattern; see \cref{fig:Complete-Example-MD6}.  For each variable~$a$
  in~$\varphi$, $G$ contains the variable gadget~$V_a$, for each
  clause $a \lor b$ or $a \lor b \lor c$ (where $a$, $b$, and $c$ are
  literals), $G$ contains the corresponding clause gadget.  The two or
  three vertices in the clause gadget that are labeled with the names
  of literals are identified with the vertices that are labeled with
  the same names and that are part of the corresponding variable
  gadgets.  Due to property~\ref{enum:graph}, we can think of the
  clause vertices in~$H_\varphi$ to be connected by a path of extra
  edges.  Following this path, we connect the corresponding pairs of
  clause gadgets in~$G$ by copies of the basic building block~$B$.
  Specifically, we connect vertex~1 (and vertex~7) of~$B$ to the right
  (left) 0-labeled vertex of the left (right) neighboring clause
  gadget.  This completes the description of~$G$.

  It is easy to see that the reduction can be performed in polynomial
  time and that the maximum degree of~$G$ is~6.  Also note that if the
  left 0-labeled vertex in the leftmost clause gadget is colored white
  then, according to \cref{lem:clause-gadget}, the right 0-labeled
  vertex of that gadget is also white.  Arguing as in that proof, we
  get that the left 0-labeled vertex of the next clause gadget is also
  white etc.  In other words, all 0-labeled vertices must have the
  same color.

  To show the correctness of our reduction, we first assume that
  $(\varphi,H_\varphi)$ is a yes-instance, that is, there is a
  satisfing truth assignment for~$\varphi$.  We need to show that then
  $G$ admits a legal coloring.  For each variable $a$ in~$\varphi$
  that is set to true, we color the vertices labeled $a$ in~$V_a$ gray
  and the vertex labeled~$\bar a$ white.  If $a$ is set to false, we
  swap the two colors.  (This is the case in
  \cref{fig:Complete-Example-MD6}.)  We color all 0-labeled vertices
  white and all copies of~$B$ inside the clause gadgets and between
  neighboring clause gadgets as in \cref{fig:BasicMaxDeg6}.

  In either case, we can extend the coloring of~$V_a$ to a legal
  coloring.  This is due to \cref{lem:variable-gadget}.  Since the
  truth assignment is satisfiable, for each $i \in \{1,2,\dots,m\}$,
  at least one of the three literals in clause~$C_i$ is true, so the
  vertex with the corresponding label will be gray and not all
  vertices of the outer cycle of the gadget for~$C_i$ are white.
  Hence all outer cycles of the clause gadgets have a legal coloring
  together with the white 0-labeled vertices.  Every copy of~$B$ in
  the chain of clause gadgets can be colored as in
  \cref{fig:BasicMaxDeg6}.  Thus, the coloring of~$G$ is legal.

  In the reverse direction, assume that $G$ is a yes-instance of
  \textsc{2-Lva}.  Then all 0-labeled vertices have the same color,
  say, white.  We assign the value true to each variable~$a$ where the
  vertices with that label are gray.  To the other variables, we
  assign the value false.  Due to \cref{lem:variable-gadget}, the
  vertices with label~$\bar{a}$ have a different color than those with
  label~$a$.  Since, in each clause gadget, at least one of the
  vertices that are labeled with literal names must be gray, the
  corresponding literal evaluates to true and the clause is satisfied.
  Thus, the variable assignment is satisfying.
\end{proof}

\begin{figure}[tbh]
  \centering
  \includegraphics[width=\textwidth]{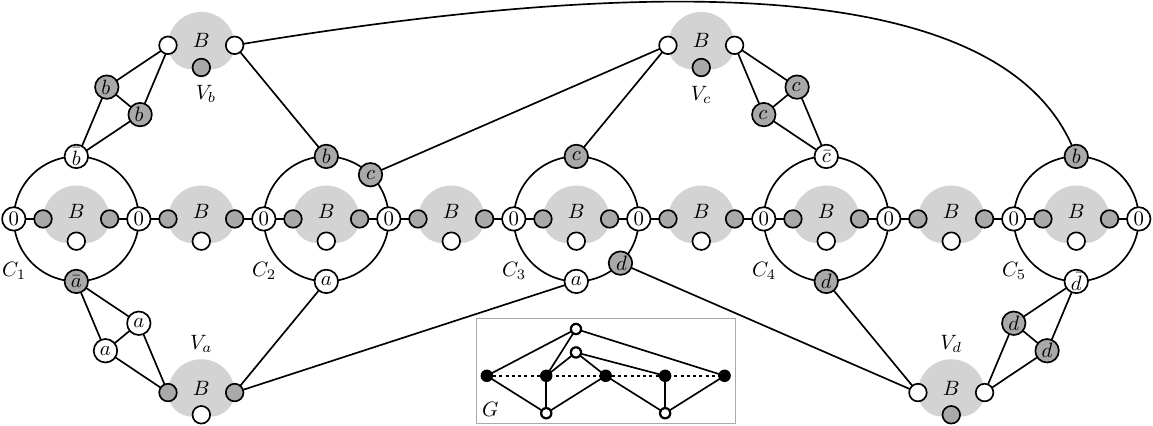}
  \caption{The graph $H_\varphi$ with black clause vertices and white variable
    vertices (see inset) and the resulting graph $G$ for
    $\varphi = (\bar{a} \lor \bar{b}) \land (a \lor b \lor c) \land (a
    \lor c \lor d) \land (\bar{c} \lor d) \land (b \lor \bar{d})$}
  \label{fig:Complete-Example-MD6}
\end{figure}

\section{NP-Hardness of 2-LVA for Graphs of Maximum Degree~5}
\label{sec:hardness-deg5}

For simplicity, we again reduce from
\textsc{Clause-Linked-Planar-Exactly-3-Bounded-3-SAT} although we do
not really need planarity here.  Our reduction is similar to that in
the proof of \cref{thm:hardness-planar} but we use gadgets with
maximum degree~5 instead of~6.

Here, our basic building block is $K_5^-$, that is, the graph that is
obtained from~$K_5$ by removing one edge; see \cref{fig:K5-}.  Due to
Matsumoto's result~\cite{Matsumoto90}, we have $\lva(K_5^-)=2$.  Note
that the two vertices of degree~3 (vertices~1 and~2 in \cref{fig:K5-})
must have the same color in a legal coloring, say, white.  Otherwise
adding the missing edge would not change the linear vertex arboricity
of~$K_5^-$, contradicting $\lva(K_5)=3$.  Observe that among the
remaining three vertices (of degree~4) exactly one must also be white.
Indeed, the three form a triangle, so they cannot all be gray.  On the
other hand, if two of the degree-4 vertices were white, they would form
a white triangle with, e.g., vertex~1.

\begin{figure}[tbh] 
  \begin{minipage}[b]{.42\textwidth}
    \centering
    \includegraphics{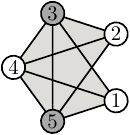}
    \caption{The basic building block $K_{5}^-$} 
    \label{fig:K5-}
  \end{minipage}
  \hfill
  \begin{minipage}[b]{.54\textwidth}
    \centering
    \includegraphics{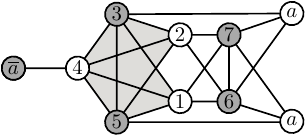}
    \caption{The variable gadget~$V_a'$ for variable~$a$} 
    \label{fig:VarK5}
  \end{minipage}
\end{figure}

Now we extend our basic building block to a variable gadget; see
\cref{fig:VarK5}.  First we add a pair of adjacent vertices (6 and~7)
and connect both of them to both vertices~1 and~2.  Then we add two
nonadjacent vertices labeled~$a$ and connect both of them to both
vertices~1 and~2.  Next we connect the top $a$-labeled vertex to
vertex~3 and the bottom one to vertex~5.  Finally we add a vertex
labeled~$\bar{a}$ and connect it to vertex~4.

\begin{lemma}
  \label{lem:variable-gadget-deg5}
  Assuming that one of the $a$-labeled vertex is white, the unique
  legal coloring of the variable gadget $V_a'$ is as in
  \cref{fig:VarK5}.  In particular, the two vertices labeled~$a$ have
  the same color, whereas the unique vertex labeled~$\bar{a}$ has a
  different color.  Moreover, each of the three has a color different
  from its neighbors in~$V_a'$.
\end{lemma}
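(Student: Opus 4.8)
The plan is to follow the template of the proofs of \cref{lem:basic-block} and \cref{lem:variable-gadget}: I propagate colors step by step using the two coloring rules (every cycle is bichromatic; the neighborhood of every vertex of degree at least~$3$ is bichromatic), starting from the rigidity of the legal colorings of~$K_5^-$ recorded in the paragraph preceding \cref{fig:K5-}. Recall from there that in every legal coloring of the copy of~$K_5^-$ inside~$V_a'$ the two degree-$3$ vertices~$1$ and~$2$ receive a common color, and exactly one of the three degree-$4$ vertices~$3$,~$4$,~$5$ receives that color while the other two receive the opposite color; so only a handful of core colorings have to be considered.

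First I would check that the coloring drawn in \cref{fig:VarK5} is legal. This is a finite verification that every (short) cycle and every neighborhood of a vertex of degree at least~$3$ is bichromatic, and it establishes existence of a legal coloring of the stated form.

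For uniqueness I would assume, without loss of generality, that the top $a$-labeled vertex is colored white; note that fixing this color already breaks the symmetry of swapping the two colors, which is why the word ``unique'' makes sense. Then I run a case distinction on the common color of~$1$ and~$2$ and on which of~$3$,~$4$,~$5$ shares it. In each case I use the adjacencies that are specific to~$V_a'$ and absent from~$K_5^-$ --- the top $a$-vertex is also adjacent to vertex~$3$, the bottom $a$-vertex to vertex~$5$, the vertex~$\bar a$ to vertex~$4$, and the vertices~$6$ and~$7$ to~$1$ and~$2$ --- to exhibit a monochromatic triangle or a monochromatic neighborhood of some vertex of degree at least~$3$, thereby excluding every branch but the one shown in \cref{fig:VarK5}. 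Once the coloring of the core is pinned down, the colors of~$6$,~$7$, the bottom $a$-vertex, and~$\bar a$ are all forced by the same two rules, using the triangles through~$1$ and~$2$ and the attachments of the $a$-vertices at vertices~$3$ and~$5$ and of~$\bar a$ at vertex~$4$. Reading off this unique coloring then yields the three ``gluing'' properties claimed in the lemma: the two $a$-vertices share a color, the vertex~$\bar a$ takes the other color, and each of these three vertices differs in color from all of its neighbors inside~$V_a'$.

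The step I expect to be the main obstacle is the case analysis that eliminates the ``wrong'' core colorings: since $K_5^-$ by itself already admits several legal colorings (the choice of the distinguished vertex among~$3$,~$4$,~$5$, and the choice of the global color), one must make sure that the few extra edges of~$V_a'$ genuinely rule out all of them but the intended one. The other parts --- checking legality of the displayed coloring and propagating the forced colors outward to~$6$,~$7$, the second $a$-vertex, and~$\bar a$ --- are routine applications of the two rules and should go through without difficulty.
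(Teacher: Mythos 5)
Your plan is in substance the same as the paper's proof: the paper also fixes the upper $a$-labeled vertex to be white and then propagates forced colors through the gadget using the rigidity of the legal colorings of $K_5^-$ (it deduces in turn that vertices~3, 6, 7 are gray, hence the lower $a$-vertex is white, hence vertex~5 is gray, hence vertex~4 is white and $\bar a$ is gray). Your explicit case split over the common color of~1 and~2 and over which of~3, 4, 5 shares it is just a more systematic organization of the same finite check; if anything it is more careful, since the paper's very first deduction (that \emph{all three} neighbors of the white upper $a$-vertex are gray) is exactly the step that needs the elimination of the branch ``1 and 2 gray,'' which fails because of the monochromatic 4-cycle through 1, 2, one of $\{3,4,5\}$, and one of $\{6,7\}$. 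One concrete warning before you execute the case analysis: the adjacency list you extracted from the prose does not match the figure the proof relies on. In \cref{fig:VarK5} the two $a$-labeled vertices are attached to vertices~6 and~7 (so that each forms a triangle with the edge $67$), \emph{not} to vertices~1 and~2 --- otherwise vertices~1 and~2 would have degree~7, and, worse, the gadget would admit a second legal coloring (gray $=\{1,2,3\}$, white $=\{4,5,6,7\}$ together with both $a$-vertices, with $\bar a$ colorable either way), so the lemma as stated would be false. With the top $a$-vertex adjacent to $\{3,6,7\}$ and the bottom one to $\{5,6,7\}$, your case analysis goes through and yields exactly the coloring of \cref{fig:VarK5}.
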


\begin{proof}
  Let's assume that the upper vertex labeled~$a$ is white; the other
  case is symmetric.  This implies that the neighbors of that vertex
  (i.e., vertices~3, 6, and~7) must all be gray.  Since the lower
  $a$-vertex forms a triangle with vertices~6 and 7, it must be white,
  which forces vertex~5 to be gray.  Now it is clear that vertex~4
  must be white and vertex~$\bar a$ must be gray.
\end{proof}

Next we set up the clause gadget; see \cref{fig:clause-gadget-deg5}.
For three-variable (two-variable) clause, the gadget consists only of
a 6-cycle (5-cycle).  Three of the vertices are labeled~0; the others
are labeled with the names of the corresponding literals.

As in the proof of \cref{thm:hardness-planar}, we build a chain of
clause gadgets.  In this chain, we link two consecutive gadgets as
follows.  We connect the three 0-labeled vertices to those of the next
clause gadget via three vertices labeled~1, 2, and~3 that form a path
of length~2 with vertex~2 being in the middle.  Each of these three
vertices is connected to three 0-labeled vertices: vertex~2 has two
such neighbors on the left and one on the right; for vertices~1 and~3
it is opposite.

\begin{figure}[tbh] 
  \centering
  \includegraphics[page=4]{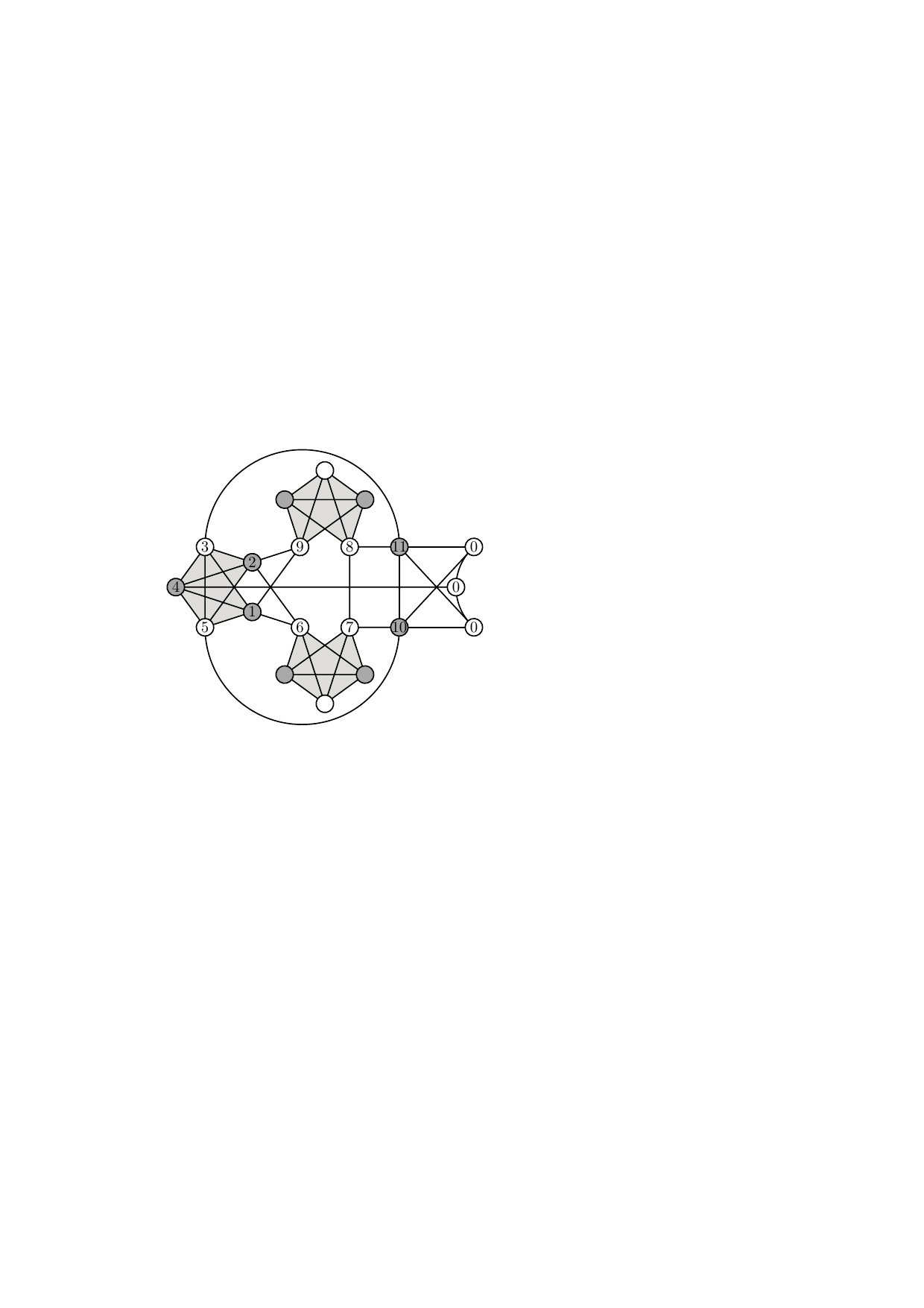}
  \caption{Gadget for the clause $a \lor b \lor c$}
  \label{fig:clause-gadget-deg5}
\end{figure}

\begin{lemma}
  \label{lem:clause-gadget-deg5}
  Assuming that the three 0-labeled vertices of a clause gadget have
  the same color, those of the next clause gadget in the chain must
  have the same color.
\end{lemma}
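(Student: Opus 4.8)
The plan is to mimic the argument used in the proof of Theorem~\ref{thm:hardness-planar} (the ``all 0-labeled vertices must have the same color'' observation), but adapted to the degree-5 linking structure. Without loss of generality, suppose the three 0-labeled vertices of the current clause gadget are all white. The key object to analyze is the three linking vertices (labeled~1, 2, and~3) that form a path of length~2, together with their incidences to the three white 0-labeled vertices on the left.

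First I would argue that each of the three linking vertices must be gray. Indeed, consider vertex~2: it has two neighbors among the (white) 0-labeled vertices of the current gadget, so if vertex~2 were white, its closed neighborhood would contain a monochromatic white ``cherry'' (a path of length~2 centered at a 0-labeled vertex, or directly two white neighbors of the degree-$\ge 3$ vertex~2), violating rule~2 (the neighborhood of a degree-$\ge 3$ vertex must be bichromatic) --- more carefully, a white vertex~2 adjacent to two white 0-labeled vertices would, in combination with the remaining white 0-labeled vertex, be constrained, so I would instead phrase it cleanly: if vertex~2 is white then its two white 0-neighbors and itself sit inside the white class, but then vertex~2 is a vertex of degree at least~3 whose neighborhood is not bichromatic, contradiction (or, if the two white 0-neighbors happen to be adjacent in the gadget cycle, they form a white triangle with vertex~2). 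The same reasoning with the roles of ``left'' and ``right'' swapped shows vertices~1 and~3 are gray as well, since each of them has two white 0-neighbors on one side.

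Next I would propagate to the next clause gadget. Each of the three 0-labeled vertices of the next gadget is adjacent to some of the linking vertices~1, 2, 3, all of which are now known to be gray; by the incidence pattern described (vertex~2 has one neighbor on the right, vertices~1 and~3 have two neighbors each on the right, together covering the three next-gadget 0-vertices with multiplicity), each 0-labeled vertex of the next gadget has at least two gray neighbors among $\{1,2,3\}$ --- or is itself forced. A 0-labeled vertex~$v$ of the next gadget that has two gray neighbors among $\{1,2,3\}$ cannot be gray without creating a gray cherry centered at~$v$ in violation of rule~2 (since $v$ has degree $\ge 3$), hence $v$ is white; and the one 0-labeled vertex adjacent to only vertex~2 on this side still gets forced white by the cycle structure of the clause gadget once the other two are white (a 6- or 5-cycle with two specified white vertices and the need to keep the white class from inducing a monochromatic cycle). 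So all three 0-labeled vertices of the next gadget are white, i.e.\ they share a common color, as claimed. If the starting color were gray instead of white, swap the two colors throughout.

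The main obstacle I expect is getting the incidence bookkeeping exactly right: one must verify from \cref{fig:clause-gadget-deg5} that the three linking vertices really do ``cover'' the three 0-labeled vertices of the next gadget in a way that forces each of them (either directly via two gray neighbors, or indirectly via the clause-cycle parity once the others are pinned down), and one must double-check that the degree bound~5 is respected at vertices~1, 2, 3 (each has three 0-neighbors plus up to two path-neighbors, so degree exactly~4 or~5) and at the 0-labeled vertices (each is on the gadget cycle, contributing degree~2, plus links to the linking vertices of the chain on both sides). These are routine but fiddly; the logical core --- ``two same-colored neighbors of a degree-$\ge 3$ vertex is forbidden'' applied twice, once to conclude the linking vertices are gray and once to conclude the next 0-vertices are white --- is the same two-line argument as in \cref{lem:clause-gadget} and Theorem~\ref{thm:hardness-planar}.
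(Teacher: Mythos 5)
Your proof follows the same skeleton as the paper's (first force the three linking vertices gray, then propagate whiteness to the next gadget's 0-labeled vertices), but the tool you use to carry out every forcing step --- ``a degree-$\ge 3$ vertex cannot have two neighbors of its own color'' --- is false, and this breaks each step. A white vertex with two white neighbors is just an interior vertex of a white path, which is exactly what a linear forest permits; the genuine constraints are that no vertex may have \emph{three} neighbors of its own color (a monochromatic star) and that no color class may contain a cycle. Concretely, vertex~2 of the link is forced gray not because it has two white 0-neighbors, but because together with the three consecutive, all-white 0-labeled vertices of the current gadget it closes a $4$-cycle, which must not be monochromatic. Your forcing of vertices~1 and~3 is moreover circular: their ``two 0-neighbors on one side'' lie in the \emph{next} gadget, whose colors are precisely what the lemma is trying to determine. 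The paper instead forces 1 and~3 gray by the star rule centered at the \emph{middle} 0-labeled vertex of the current gadget, which is white and already has two white neighbors on the gadget cycle, so its remaining neighbors (namely 1 and~3) must be gray.

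The propagation step has the same problem. The middle 0-labeled vertex of the next gadget is forced white by the star rule centered at vertex~2 (gray, already having the two gray neighbors 1 and~3), not by a ``gray cherry centered at~$v$''; and the two outer 0-labeled vertices are forced white because each lies on a triangle with gray linking vertices, which would otherwise become monochromatic. Your fallback for the 0-vertex adjacent only to vertex~2 --- that it is ``forced white by the cycle structure once the other two are white'' --- is backwards: two white vertices on a 5- or 6-cycle do not force a third vertex white (avoiding a monochromatic cycle would, if anything, push the opposite way). So although your conclusions coincide with the paper's, none of the individual deductions is justified as written; each ``two same-colored neighbors'' appeal must be replaced by the correct star rule (three same-colored neighbors of a same-colored vertex) or by an explicit monochromatic 3- or 4-cycle read off from \cref{fig:clause-gadget-deg5}.
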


\begin{proof}
  Assuming that the 0-labeled vertices of the left clause gadget
  (light orange in \cref{fig:clause-gadget-deg5}) are all white, the
  middle 0-labeled vertex has two white neighbors (the other 0-labeled
  vertices), so its other neighbors (vertices~1 and~3) must be gray.
  Vertex~2 forms a cycle of length~4 together with the three 0-labeled
  vertices, so vertex~2 must also be gray.

  Similarly, vertex~2 has two gray neighbors (vertices~1 and~3), hence
  the middle 0-labeled vertex of the next clause gadget (indicated in
  yellow in \cref{fig:clause-gadget-deg5}) must be white.  The other
  two 0-labeled vertices of that gadget each form a triangle with
  vertices~1 and~3, so both of them must be white.
\end{proof}

The last ingredient for our construction is the \emph{starter gadget}.
It ensures that the 0-labeled vertices of the first clause gadget in
the chain must receive the same color.

\begin{figure}[tb]
  \centering
  \includegraphics[page=3]{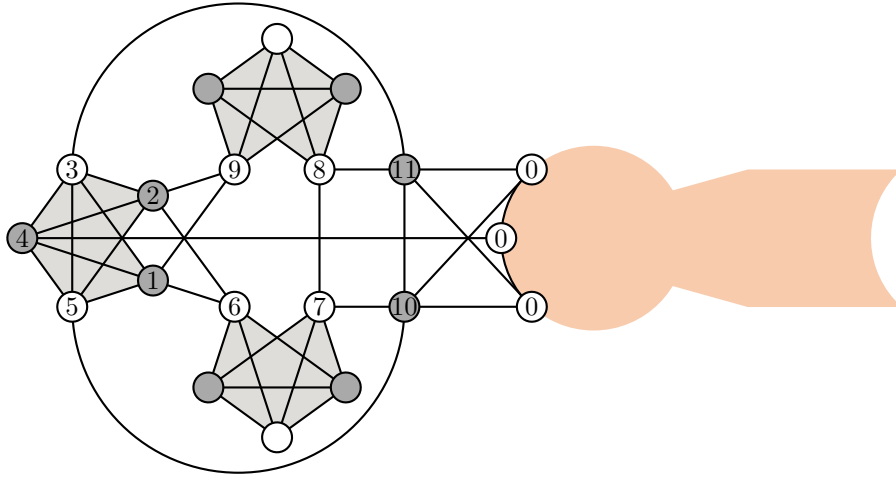}
  \caption{The starter gadget consists of three copies of $K_{5}^{-}$
    and a~$K_2$.  It ensures that the 0-labeled vertices of the first
    clause gadget in the chain must receive the same color.}
  \label{fig:starter-gadget-deg5}
\end{figure}

\begin{lemma}
  \label{lem:starter-gadget-deg5}
  Assuming that vertex~1 of the starter gadget is colored gray, the
  unique legal coloring of the starter gadget is as in
  \cref{fig:starter-gadget-deg5}.  In particular, the three 0-labeled
  vertices to which the starter gadget is connected all have the same
  color in a legal coloring.
\end{lemma}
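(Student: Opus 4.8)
The plan is to bootstrap from the structural facts about $K_5^-$ that were established just before \cref{lem:variable-gadget-deg5}: in any legal coloring of a copy of $K_5^-$, the two degree-$3$ vertices receive the same color and, among the three degree-$4$ vertices (which form a triangle and hence cannot all be alike), exactly one receives that same color while the other two receive the opposite color. Together with the two local rules for legal colorings — every cycle is bichromatic, and the neighborhood of every vertex of degree at least~$3$ is bichromatic — this means that the coloring of a single $K_5^-$ is completely determined once we know the color of one of its degree-$3$ vertices \emph{and} which of its degree-$4$ vertices is the exceptional one.

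First I would fix the coloring inside the $K_5^-$ copy that contains vertex~$1$: since vertex~$1$ is gray, its partner degree-$3$ vertex is gray as well, and the edges joining this copy to the rest of the starter gadget force a unique choice of the exceptional degree-$4$ vertex, hence a unique legal coloring of that copy as drawn in \cref{fig:starter-gadget-deg5}. Next I would propagate this information across the $K_2$: its two endpoints must receive different colors, and each endpoint already has two neighbors of a common color inside its incident $K_5^-$ copy (being a degree-$3$ vertex there), so both endpoint colors are forced; this in turn determines the exceptional degree-$4$ vertex of the adjacent copy and thus its whole coloring. Repeating the argument once more settles the third copy. Finally, with all three copies colored, each of the three $0$-labeled attachment vertices either lies in the neighborhood of an already-colored vertex of degree at least~$3$ or closes a triangle with two already-colored vertices; a short case check then shows that all three are forced to the same color (white in the figure), which gives the ``in particular'' part of the statement.

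The step I expect to be the main obstacle is verifying that the extra edges of the gadget genuinely break the color-swap symmetry of each individual $K_5^-$ — that is, that they single out the \emph{correct} exceptional degree-$4$ vertex rather than leaving two legal options — and, relatedly, that the $K_2$ really transmits the coloring from one copy to the next (rather than merely forbidding a monochromatic edge). This is where one has to look carefully at which vertices are shared between consecutive $K_5^-$ copies and at the degrees of the gluing vertices, using the ``degree-$\ge 3$ neighborhood is bichromatic'' rule to rule out the unwanted colorings. Everything else is routine propagation via the two rules above.
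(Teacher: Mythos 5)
Your overall strategy---propagate forced colors outward from vertex~1 using the two facts about $K_5^-$ (the degree-3 pair shares a color; exactly one degree-4 vertex joins them) together with the cycle and star rules---is indeed the skeleton of the paper's proof. But the proposal has a concrete error and defers the actual crux. The error is your claim that the two endpoints of the $K_2$ ``must receive different colors.'' In a partition into linear forests adjacent vertices may well share a color (an edge is a path), and in the paper's forced coloring the two adjacent gluing vertices (7 and~8) are in fact \emph{both} white. The real mechanism is different: each of 7 and~8 is forced white as the degree-3 partner of an already-white vertex in its own copy, and then each of them has \emph{two} white neighbors (the other endpoint of the $K_2$ plus the one white degree-4 vertex guaranteed to exist in its copy), so any further neighbor must be gray to avoid a white star. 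Your ``endpoints must differ'' reading would not produce these forcings and, taken literally, contradicts the target coloring.

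The second gap is the order of propagation. You propose to fully determine the first $K_5^-$ before touching the rest, claiming its attachment edges immediately single out the exceptional degree-4 vertex. They do not: the paper can only conclude at first that \emph{some} vertex among 3, 4, 5 is gray, uses that existential fact to force vertices 6 and~9 white, pushes through the other two copies to make 10 and~11 gray, and only \emph{then} returns to deduce that 3 and~5 are white (each would otherwise have three gray neighbors) and hence that 4 is the gray one. So the symmetry-breaking you flag as ``the main obstacle'' is resolved by a round trip through the whole gadget, not locally within the first copy; since you leave exactly this verification open, the proposal is a plausible plan rather than a proof.
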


\begin{proof}
  If vertex~1 is gray, vertex~2 must also be gray as we have observed
  in the beginning of this section.  Since one of the vertices~3, 4,
  and 5 must also be gray, vertices~6 and 9 must be white.  This
  forces vertices~7 and~8 to also be white.  Since~7 and~8 are
  adjacent and each have another white neighbor within their copies
  of~$K_5^-$, their neighbors~10 and~11 must be gray.  Now it is clear
  that vertices~3 and~5 must be white (otherwise they would have three
  neighbors in their own color), and hence vertex~4 is gray.  Since~4
  is gray and has two gray neighbors, the middle 0-labeled vertex must
  be white.  The fact that~10 and~11 are gray forces the other two
  0-labeled vertices, which both form triangles with each of~10
  and~11, to also be white.
\end{proof}

\begin{theorem}
  \textsc{2-Lva} is NP-hard, even for graphs of maximum degree~5.
\end{theorem}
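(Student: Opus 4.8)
The plan is to mirror the reduction used in the proof of \cref{thm:hardness-planar}, replacing the degree-6 gadgets by the degree-5 gadgets developed above. Given an instance $(\varphi,H_\varphi)$ of \textsc{Clause-Linked-Planar-Exactly-3-Bounded-3-SAT} with clauses $C_1,\dots,C_m$, I would build $G$ as follows. For each variable $a$, $G$ contains a copy of the variable gadget $V_a'$ of \cref{fig:VarK5}; for each clause $C_i$, $G$ contains a copy of the clause gadget of \cref{fig:clause-gadget-deg5} (a 5-cycle if $C_i$ has two literals, a 6-cycle if it has three). The vertex of a clause gadget labeled with a literal $x$ is identified with the vertex labeled $x$ in the corresponding variable gadget. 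Fixing a linear order of the clauses, I link each clause gadget to the next one by the length-2 connector path on vertices~$1,2,3$ described before \cref{lem:clause-gadget-deg5}, and I attach the starter gadget of \cref{fig:starter-gadget-deg5} to the three 0-labeled vertices of the first clause gadget. This completes the construction of $G$. It can clearly be carried out in polynomial time, and since \textsc{2-Lva} is obviously in NP, it remains to verify that $\Delta(G)=5$ and that the reduction is correct.

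For correctness I would argue exactly as in \cref{thm:hardness-planar}. By \cref{lem:starter-gadget-deg5}, any legal coloring of $G$ assigns the same color to the three 0-labeled vertices of the first clause gadget, and by \cref{lem:clause-gadget-deg5} this color then propagates along the chain, so all 0-labeled vertices of all clause gadgets share one color, say white. If $\varphi$ has a satisfying assignment, I color each $V_a'$ as in \cref{lem:variable-gadget-deg5} --- the $a$-labeled vertices gray and $\bar a$ white if $a$ is true, swapped otherwise --- color all 0-labeled vertices white, and color every copy of $K_5^-$ inside the variable, clause, chain, and starter gadgets as in its figure. Because every clause is satisfied, each clause gadget contains a gray literal vertex, so its cycle is not monochromatic and the coloring is legal; hence $\lva(G)\le 2$, and in fact $\lva(G)=2$ since $G$ contains $K_5^-$, which is not a linear forest. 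Conversely, from a legal coloring of $G$ (in which all 0-labeled vertices are, say, white) I set $a$ to true exactly when its $a$-labeled vertices are gray; by \cref{lem:variable-gadget-deg5} the vertex $\bar a$ is then white. Since no clause cycle can be entirely white, each clause contains a gray, hence true, literal, so the assignment satisfies $\varphi$.

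The step that demands the most care is the degree bookkeeping, in particular at vertices that belong to more than one gadget: a literal vertex of a clause gadget is identified with a vertex of a variable gadget that is already adjacent to vertices~1 and~2 of its copy of $K_5^-$ and to one of the vertices~3 and~5, and it additionally receives the two cycle edges of the clause gadget; one has to confirm that this sums to at most~5, and likewise that each 0-labeled vertex --- incident to two cycle edges and to connector vertices of the chain on one or both sides, or to the starter gadget on the left in the first clause gadget --- stays within degree~5. The rest is a routine transcription of the argument for \cref{thm:hardness-planar}, using \cref{lem:variable-gadget-deg5,lem:clause-gadget-deg5,lem:starter-gadget-deg5} in place of their degree-6 analogues.
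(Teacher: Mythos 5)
Your proposal matches the paper's proof: the same construction (variable gadgets $V_a'$, 5-/6-cycle clause gadgets with identified literal vertices, the chain of connector triples $1,2,3$, and the starter gadget at the head of the chain) and the same correctness argument via \cref{lem:variable-gadget-deg5,lem:clause-gadget-deg5,lem:starter-gadget-deg5}, transcribed from the proof of \cref{thm:hardness-planar}. The degree bookkeeping you flag does check out (literal vertices have degree $3+2=5$, the 0-labeled and connector vertices also reach exactly $5$), which the paper asserts without detail.
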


\begin{proof}
  We now describe our reduction.  Given an instance
  $(\varphi,H_\varphi)$ of
  \textsc{Clause-Linked-Planar-Exactly-3-Bounded-3-SAT}, we again
  construct a planar graph~$H'$ using the planar embedding
  of~$H_\varphi$ as a pattern.  Variable gadgets interact with clause
  gadgets as in the proof of \cref{thm:hardness-planar}, clause
  gadgets again form a chain where consecutive gadgets are linked with
  copies of the three vertices~1, 2, and~3 in
  \cref{fig:clause-gadget-deg5}.  Structurally the only difference is
  that the chain starts with a single occurrence of the starter gadget
  shown in \cref{fig:starter-gadget-deg5}.  This completes the
  description of the graph~$H'$.  Clearly, the reduction takes
  polynomial time and the resulting graph~$H'$ has maximum vertex
  degree~5.
  
  Given Lemmas~\ref{lem:variable-gadget-deg5},
  \ref{lem:clause-gadget-deg5}, and~\ref{lem:starter-gadget-deg5}, the
  correctness of our reduction follows similarly as in the proof of
  \cref{thm:hardness-planar}.
\end{proof}

\section{An FPT Algorithm for $k$-LVA}
\label{sec:fpt}

In this section we show that \textsc{$k$-Lva} is fixed-parameter
tractable with respect to treewidth.  We use Monadic Second-Order
Logic (MSO$_2$) -- a subset of {\em second-order logic}~-- to express
for a graph~$G$ that $\lva(G) \le c$.

MSO$_2$ formulas are built from the following primitives.
\begin{itemize}
\item variables for vertices, edges, sets of vertices, and sets of edges;
\item binary relations for: equality ($=$), membership in a set ($\in$), subset of a set ($\subseteq$),
 and edge--vertex incidence ($I$);
\item standard propositional logic operators: $\lnot$, $\land$,
  $\lor$, $\rightarrow$, and $\leftrightarrow$.
\item standard quantifiers ($\forall,\exists$) which can be applied to
  all types of variables.
\end{itemize}
For a graph $G$ and an MSO$_2$ formula~$\phi$, we use $G \models \phi$
to indicate that $\phi$ can be satisfied by $G$ in the obvious way.
Properties expressed in this logic allow us to use the powerful
algorithmic result of Courcelle (and Engelfriet) stated next. 

\begin{theorem}[\!\!\cite{courcelle1990,courcelle2012}]
  \label{thm:courcelle}
  For any integer $t \geq 0$ and any MSO$_2$ formula $\phi$ of
  length~$\ell$, an algorithm can be constructed that takes a graph
  $G$ with $n$ vertices, $m$ edges, and treewidth at most $t$, and
  decides whether $G \models \phi$ in time $O(f(t,\ell)\cdot (n+m))$,
  where the function $f$ is computable.
\end{theorem}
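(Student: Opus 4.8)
The plan is to prove the theorem via the automata-theoretic route, which reduces MSO$_2$ model checking on bounded-treewidth graphs to the acceptance problem for finite tree automata. The two classical ingredients I would combine are (i) the fact that every graph of treewidth at most~$t$, together with a width-$t$ tree decomposition, can be encoded as a labeled tree over a \emph{finite} alphabet~$\Sigma_t$ whose size depends only on~$t$, and (ii) the Thatcher--Wright/Doner theorem that a language of finite labeled trees is MSO-definable if and only if it is recognized by a finite bottom-up deterministic tree automaton.

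First I would make the input uniform. The statement only assumes $\mathrm{tw}(G)\le t$, so I would invoke Bodlaender's algorithm to compute, in time $g(t)\cdot(n+m)$ for a computable~$g$, a tree decomposition of~$G$ of width at most~$t$, and then convert it into a \emph{nice} tree decomposition with $O(n)$ nodes of only the standard types (leaf, introduce-vertex, forget-vertex, join), augmented with introduce-edge nodes so that every edge of~$G$ is introduced exactly once. This last point is what lets the encoding capture the edge relation, which is essential because $\phi$ is an MSO$_2$ formula that may quantify over sets of edges. Labeling each node with its bag contents (relative to a fixed local numbering of the at most $t+1$ vertices in a bag) and with its node type yields a tree~$T_G$ over a finite alphabet~$\Sigma_t$ from which the universe of vertices and edges, as well as the incidence relation~$I$, can be recovered.

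Next I would translate $\phi$ into an equivalent formula $\widehat\phi$ interpreted over~$T_G$. Concretely, I would show by induction on the structure of $\phi$ how each quantifier over vertices, edges, vertex sets, and edge sets of~$G$ is simulated by a corresponding (set) quantifier over nodes of~$T_G$, and how the atomic relations $=,\in,\subseteq,I$ are expressed in terms of the labels of~$T_G$. The key combinatorial fact making this work is that each vertex and each edge of~$G$ is ``owned'' by a well-defined set of nodes of the decomposition, so membership and incidence become MSO-expressible properties of node sets. The resulting~$\widehat\phi$ has length bounded by a function of~$\ell$ and~$t$, and $G\models\phi$ holds if and only if $T_G\models\widehat\phi$.

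Finally I would apply the equivalence between MSO over finite trees and finite tree automata to compile $\widehat\phi$ into a bottom-up deterministic tree automaton~$A_\phi$ over~$\Sigma_t$ that recognizes exactly the encoded decompositions satisfying~$\phi$; the number of states of~$A_\phi$ is a computable function $f(t,\ell)$. Running~$A_\phi$ on~$T_G$ is a single bottom-up pass costing $O(1)$ per node after a table lookup, so the model-checking step costs $O(f(t,\ell)\cdot n)$, and together with the preprocessing this yields the claimed $O(f(t,\ell)\cdot(n+m))$ bound. The main obstacle is this last step: the translation from MSO to tree automata is non-elementary, since each quantifier alternation can trigger a power-set determinization, so $f(t,\ell)$ is in general a tower of exponentials in~$\ell$. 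This is unavoidable but harmless here, as the statement only requires $f$ to be computable; the delicate part of the argument is then the careful verification of the semantic equivalence $G\models\phi \iff T_G\models\widehat\phi$, which I would spell out in full.
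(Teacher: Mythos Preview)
Your outline is the standard automata-theoretic proof of Courcelle's theorem and is essentially correct as a sketch. However, there is nothing to compare it against: in the paper this theorem is not proved but merely \emph{quoted} from the literature (the citation \cite{courcelle1990,courcelle2012} in the theorem header signals this). The paper uses Courcelle's theorem as a black box in the proof of \cref{thm:fpt}; it does not attempt to reprove it. So your proposal is not wrong, but it addresses a statement the paper deliberately imports rather than establishes.
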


We now show how Courcelle's result helps us.  Note that there is an
algorithm running in time $2^{O(t)}\cdot n$ that, given a graph~$G$
and a positive integer~$t$, either constructs a tree decomposition
of~$G$ of width at most $2t+1$ or reports that the treewidth of~$G$ is
greater than~$t$ \cite{korhonen}.

\begin{theorem}
  \label{thm:fpt}
  Given a graph~$G$ with $n$ vertices, $m$ edges, and treewidth at
  most~$t$, $\lva(G)$ can be computed in $O(f(t)\cdot(m+n))$ time,
  where $f$ is some computable function.  In other words,
  \textsc{$k$-Lva} is FPT with respect to treewidth.
\end{theorem}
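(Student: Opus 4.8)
The plan is to express the property $\lva(G) \le k$ as an MSO$_2$ formula of length depending only on~$k$, and then invoke \cref{thm:courcelle} together with the tree-decomposition algorithm of Korhonen~\cite{korhonen} cited above. First I would introduce $k$ free set-of-vertices variables $V_1, \dots, V_k$ and assert that they form a partition of the vertex set: every vertex lies in some $V_i$ (i.e.\ $\forall v\,\bigvee_{i=1}^k v \in V_i$) and no vertex lies in two parts (i.e.\ $\forall v\,\bigwedge_{1 \le i < j \le k} \lnot(v \in V_i \land v \in V_j)$). Then, for each~$i$, I would assert that $G[V_i]$ is a linear forest. Finally the whole formula is $\exists V_1 \dots \exists V_k\,(\text{partition} \land \bigwedge_{i=1}^k \text{linForest}(V_i))$, which has length $O(k^2)$ (a constant depending only on the fixed~$k$), so $f$ in \cref{thm:courcelle} becomes a computable function of $t$ alone.

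The main content is to write $\mathrm{linForest}(V_i)$ in MSO$_2$. A vertex set~$U$ induces a linear forest if and only if $G[U]$ is acyclic and has maximum degree at most~$2$. Acyclicity of $G[U]$ is the standard MSO$_2$ condition ``there is no nonempty edge set $F$ with all endpoints in~$U$ such that every vertex incident to an edge of~$F$ is incident to an even number (here: exactly two, using an auxiliary set argument, or the classical connectivity-based formulation) of edges of~$F$''; more simply, one uses the well-known fact that a graph is a forest iff every nonempty vertex subset~$W$ contains a vertex with at most one neighbor inside~$W$, which is directly MSO$_1$-expressible and hence MSO$_2$-expressible. The maximum-degree-$\le 2$ condition says: for every vertex $v \in U$ there do \emph{not} exist three pairwise-distinct edges $e_1, e_2, e_3$, each incident to~$v$, whose other endpoints all lie in~$U$; this is a fixed-length MSO$_2$ formula using edge variables and the incidence relation~$I$. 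Conjoining these two gives $\mathrm{linForest}(U)$ as an MSO$_2$ formula of constant length.

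With the formula $\phi_k$ in hand, the algorithm is: run Korhonen's $2^{O(t)}\cdot n$-time routine to either obtain a tree decomposition of width at most $2t+1$ or certify treewidth~$> t$ (in the latter case we may simply have been given a wrong bound, but by hypothesis $\mathrm{tw}(G) \le t$, so this branch does not occur); then apply the algorithm of \cref{thm:courcelle} with the decomposition of width $t' = 2t+1$ and the fixed formula $\phi_k$, deciding $G \models \phi_k$ in time $O(f(t',\ell)\cdot(n+m)) = O(g(t)\cdot(n+m))$ for a computable~$g$. To compute $\lva(G)$ exactly rather than just decide \textsc{$k$-Lva}, I would run this for $k = 1, 2, 3, \dots$; by Matsumoto's bound $\lva(G) \le 1 + \lfloor \Delta(G)/2 \rfloor \le 1 + \lfloor t/1 \rfloor$ is bounded in terms of~$t$ (since $\Delta(G) \le$ something bounded by treewidth only when the graph is, e.g., not too dense — more carefully, $\lva(G) \le \mathrm{tw}(G)+1$ always, because a tree decomposition of width $t$ yields a proper $(t{+}1)$-coloring, and each color class is even an independent set, hence a linear forest), so at most $t+1$ iterations suffice and the total running time is still $O(f(t)\cdot(m+n))$.

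The step I expect to require the most care is the MSO$_2$ encoding of acyclicity of the induced subgraph $G[V_i]$: one must make sure the quantified edge/vertex sets are restricted to edges with both endpoints in $V_i$, and that the chosen formulation (the ``every nonempty subset has a vertex of internal degree $\le 1$'' phrasing, or a direct cycle-forbidding phrasing via connected edge sets) is genuinely of bounded length and syntactically legal MSO$_2$. None of this is deep, but it is the place where a sloppy formula would either fail to be MSO$_2$ or fail to capture ``linear forest'' precisely; everything after that is a routine application of the two cited theorems.
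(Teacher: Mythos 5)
Your proposal is correct and follows essentially the same route as the paper: express ``is a linear forest'' in MSO$_2$ (the paper forbids induced cycles and degree-$\ge 3$ vertices in every subset, which is equivalent to your acyclicity-plus-max-degree-$2$ formulation), quantify existentially over a $k$-part vertex partition, bound $k$ by $\chi(G)\le t+1$ so that the formula length depends only on $t$, and apply Courcelle's theorem on a tree decomposition obtained via Korhonen's algorithm. No gaps.
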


\begin{proof}
  Let $\tau$ be the treewidth and let $\chi$ be the chromatic number
  of the given graph~$G$.  Since every independent set is a linear
  forest, we have $k \le \chi$.  On the other hand, it is well-known
  that $\chi \le \tau+1$.
  Hence, $k \le \tau+1$, and $\tau$ is the only remaining parameter.
  
  We formulate \textsc{$k$-Lva} in MSO$_2$.  In the following, we
  repeatedly use $V$ for $V(G)$, $E$ for $E(G)$, $U$ for a subset
  of~$V$, $u$ and $v$ for vertices of~$G$, $e$ for an edge of~$G$, and
  $I(e,v)$ 
  to express that vertex~$v$ is incident to edge~$e$.

  We first construct a predicate that expresses, for a set~$U$ of
  vertices, that $G[U]$ is connected.  To this end we require, for
  every nonempty proper subset $U'$ of~$V$, that there exists at least
  one edge that connects a vertex in $U'$ to a vertex
  in~$U \setminus U'$.
  \[\textsc{Conn($U$)} \equiv (\forall U' \colon \emptyset \ne U'
    \subsetneq U) (\exists v \in U')(\exists w \in U \setminus U')
    (\exists e \in E) [I(e,v) \land I(e,w)]\] %

  Next we set up a predicate that expresses that a set~$U$ of vertices
  induces a graph $G[U]$ that is a collection of cycles.  We simply
  require that for every vertex~$v$ in~$U$ has two neighbors in~$U$.
  (The notation $\exists^{=2} w \in U$ encodes the existence of
  exactly two vertices $w$ in~$U$.)
  \[\textsc{Cycle-Set($U$)} \equiv (\forall v \in U )(\exists^{=2} w
    \in U)(\exists e \in E)[I(e,v) \land I(e,w)]\]%
  Combining the last two predicates yields a predicate that
  expresses that $G[U]$ is a single cycle.
  \[\textsc{Cycle($U$)} \equiv \textsc{Cycle-Set($U$)}
    \land\textsc{Conn($U$)}\]%
  Next, we express that a set $U$ of vertices contains a \emph{star},
  that is, a vertex of degree at least~3 in~$G[U]$.
  $$\textsc{Star($U$)} \equiv (\exists v \in U )( \exists^{\ge 3} w
  \in U)(\exists e \in E)[I(e,v) \land I(e,w)]$$
  Now we express that a set $U$ of vertices is a linear forest, that
  is, no subset of $U$ induces a cycle or contains a star.
  $$\textsc{Path-Set($U$)} \equiv (\forall U' \subseteq U)[\neg
  \textsc{Cycle(}U'\textsc{)} \land \neg \textsc{Star(}U'\textsc{)}]$$

  To express a completely different graph property, Chaplick et
  al.~\cite{ckllw-bo-GD17} defined the predicate
  \textsc{Vertex-Partition($U_1, U_2, \dots, U_k$)} (VP for short),
  which is true if the sets $U_1, U_2, \dots, U_k$ form a partition of
  the vertex set~$V$ of~$G$.  In other words, every vertex $v$ of $G$
  must be contained in exactly one of the sets $U_1, U_2, \dots, U_k$.
  $$\mathrm{VP}(U_1, U_2, \dots, U_k) \equiv (\forall v) \Bigg[
  \Bigg( \bigvee_{i=1}^k v \in U_{i} \Bigg) \land \Bigg(
  \bigwedge_{i \neq j} \neg( v \in U_{i} \land v \in U_{j}) \Bigg)
  \Bigg]$$
  Finally, we can express \textsc{$k$-Lva} in MSO$_2$:
  $$\textsc{$k$-Lva($U$)} \equiv (\exists U_1, U_2,\dots, U_k)
  \Bigg[\mathrm{VP}(U_1, U_2, \dots, U_k) \land \bigwedge_{i=1}^{k}
  \textsc{Path-Set($U_{i}$)}\Bigg].$$
\end{proof}

\section{Compact ILP and SAT Formulations}
\label{sec:ilp}

In this section we give a compact ILP formulation that can easily be
turned into an equivalent SAT formulation using the same set of
variables.  (If an ILP formulation just encodes Boolean expressions,
the corresponding SAT formulation can usually be solved faster.)  At
first glance, it seems difficult to forbid cycles in a compact way,
but for \textsc{$k$-Lva} this can be accomplished by enforcing a
transitive ordering of the vertices, with the addition constraint that
vertices in the same set of the $k$-partition may only be adjacent in
the given graph if they are neighbors in the ordering.  This at the
same time ensures that every set in the partition is a linear forest.
(To see the other direction, given a partition into linear forests, we
can obtain such a special transitive ordering by concatenating the
paths of the linear forests arbitrarily.)

As in \cref{sec:hardness-planar}, we encode the sets in the desired
$k$-partition of the given graph~$G$ by coloring the vertices of~$G$
with a color in~$[k]$, which we use as shorthand for
$\{1,2,\dots,k\}$.  A coloring is legal if no color class induces a
cycle or a star (that is, a vertex of degree greater than~2).  We
introduce, for each vertex~$v$ of~$G$ and for each color $i \in [k]$,
a binary variable~$c_{v,i}$.  The intended meaning of $c_{v,i}=1$ is
that vertex~$v$ receives color~$i$.  To ensure that each vertex
receives a single color, we introduce the following constraints:
\begin{align}
  \label{eq:1}
  \sum_{i \in [k]} c_{v,i} &= 1 && \forall v \in V(G). \\
  \intertext{In a SAT formulation, this can be expressed by
  $\bigvee_{i \in [k]} c_{v,i}$ and, additionally,
  $\neg (c_{v,i} \land c_{v,j})$ for every pair
  $(i,j) \in [k]^2$ with $i \ne j$.
  Further, we introduce, for each pair of vertices~$u$
  and~$v$ of~$G$ with $u \ne v$, a binary variable~$x_{uv}$.  The
  intended meaning of $x_{uv}=1$ is that $u$ precedes~$v$ in the
  desired ordering.  We want the ordering to be antisymmetric (that
  is, $x_{u,v} = \neg x_{v,u}$):}
  \label{eq:2}
  x_{u,v} + x_{v,u} &= 1 && \forall u,v \in V(G), u \ne v \\
  \intertext{and transitive (that is,
  $(x_{u,v} \land x_{v,w}) \Rightarrow x_{u,w}$):}
  \label{eq:3}
  x_{u,v} + x_{v,w} - x_{u,w} &\le 1 && \forall u,v,w \in V(G),
                                        u \ne v \ne w \ne u. \\
  \intertext{Finally, we forbid monochromatic cycles and stars:}
  \label{eq:4}
  x_{u,v} + x_{v,w} + c_{u,i} + c_{w,i} &\le 3
  && \forall \{u,w\} \in E(G), v \in V(G) \setminus \{u,w\}, i \in [k]
\end{align}
In other words, if $u$ and $w$ are adjacent in~$G$ and both have
color~$i$, we disallow that there is a another vertex~$v$ between them
in the ordering.  In a SAT formulation, this can be expressed by
$\neg (x_{u,v} \land x_{v,w} \land c_{u,i} \land c_{w,i})$.  In total,
our ILP and SAT formulations have $O(n(n+k))$ variables and
$O(mnk+n^3)$ constraints, where $n=|V(G)|$ and $m=|E(G)|$.

\section{Open Problems}
\label{sec:open}

The obvious open problem is whether it is NP-hard to decide for planar
graphs of maximum vertex degree~5 whether their linear vertex
arboricity is~2.  We have done a computer search which showed that all
3-connected planar graphs with up to 12 vertices and maximum degree~5
have linear vertex arboricity at most~2.

The MSO$_2$-based algorithm in \cref{sec:fpt} runs in time linear in
the size of the given graph, but the dependency in terms of the
treewidth of the given graph and in terms of the length of the MSO$_2$
formula is very high.  For the much weaker parameter vertex cover
number, there is a simple explicit FPT algorithm for
\textsc{$k$-Lva}~\cite{erhardt-MTh21}.  It runs in
$O^\star\big(2^{c+2k {c \choose k}}\big)$ time, where $c$ is the
vertex cover number of the given graph and the $O^\star$-notations
hides polynomial factors.  What about parameters that lie between
vertex cover number and treewidth such as treedepth~-- does
\textsc{$k$-Lva} admit an explicit FPT algorithm with respect to such
a parameter?

\paragraph{Acknowledgments.}

We thank Oleg Verbitsky for reviewing the masterthesis of the first
author~\cite{erhardt-MTh21} and for suggesting several improvements in
the presentation.  We also thank a reviewer of an earlier version of
this paper for suggesting a slight strengthening of \cref{thm:fpt}.

\bibliographystyle{plainurl}
\bibliography{abbrv,segment-number}

\end{document}